\newcommand{\Ical}{{\mathcal I}}
\renewenvironment{description}{\list{}{
\leftmargin 12pt \itemindent8pt
}
}{
  \endlist
}
\newcommand{\Levy}{L\'evy }
\newcommand{\Ito}{It\^o }
\newcommand{\half}{\frac{1}{2}}
\newtheorem{thm}{Theorem}[section]
\newtheorem{prop}[thm]{Proposition}
\newtheorem{col}[thm]{Corollary}
\theoremstyle{definition}
\newtheorem{defin}[thm]{Definition}
\theoremstyle{remark}
\newtheorem{rem}[thm]{Remark}
\newcommand{\Ind}{{ 1}}
\newcommand{\ind}[1]{\Ind_{\{#1\}}}
\newcommand{\E}{\mathbb{E}}
\newcommand{\RR}{\mathbb{R}}
\newcommand{\PP}{\mathbb{P}}
\renewcommand{\P}{\PP}
\newcommand{\cB}{{\mathcal B}}
\newcommand{\cF}{{\mathcal F}}
\newcommand{\cS}{{\mathcal S}}
\newcommand{\cI}{{\mathcal I}}
\newcommand{\stpr}[1]{(#1_t)_{t \ge 0}}
\renewcommand{\ind}[1]{\Ind_{\{#1\}}}
\newcommand{\squishlist}{
   \begin{list}{$\bullet$}
    { \setlength{\itemsep}{0pt}      \setlength{\parsep}{3pt}
      \setlength{\topsep}{3pt}       \setlength{\partopsep}{0pt}
      \setlength{\leftmargin}{1.5em} \setlength{\labelwidth}{1em}
      \setlength{\labelsep}{0.5em} }
      }
\newcommand{\Q}{\mathbb{Q}}
\newcommand{\R}{\RR}
\begin{document}

\title{{\sf CDO term structure modelling  with L\'evy processes and the relation to market models}\footnote{
 Support by the Polish Ministry of Science and Education projects: 5 1P03A 03429 ''Stochastic
 evoluation equations with L\'evy noise'' and N N201 419039  ''Stochastic equations in infinite dimensional spaces''is gratefully acknowledged. We thank the Fields Institute
 for the kind support as well as Frank Gehmlich and an anonymous referee for their helpful comments.}}
\author {Thorsten Schmidt\footnote{Technical University Chemnitz, Reichenhainer Str. 41, 09126 Chemnitz, Germany. Email:
thorsten.schmidt@mathematik.tu-chemnitz.de } and Jerzy Zabczyk
              \footnote{Institute of Mathematics of the Polish Academy of Sciences, Sniadeckich 8, P.O.B. 21, 00-956 Warszawa 10, Poland}}

\maketitle

\begin{abstract} This paper considers the modelling of collateralized debt obligations (CDOs).
We propose a top-down model  via forward rates generalizing Filipovi\'c, Overbeck and Schmidt (2009) to the case
where the forward rates are driven by a finite dimensional L\'evy process.
The contribution of this work is twofold:  we provide conditions for absence of arbitrage
in this generalized framework. Furthermore, we  study the relation  to market models by  embedding them in the forward rate framework
in spirit of \citeN{BGM}.

\textbf{Key words}:  collateralized debt obligations, loss process, single tranche CDO, term structure of forward spreads, Levy processes,  market models, Libor  rate.
\end{abstract}

\section{Introduction}
\label{sec:intro}

A \emph{collateralized debt obligation} (CDO) is a security backed by a pool of credit risky securities.
There are issued notes on so-called tranches of the CDO which are characterized by different levels
of credit riskiness or, in financial terms, seniorities.
In this work we consider a general model for the evolution of tranche prices similar in spirit to
the forward rate approach of \citeN{HJM} and derive conditions under which the model is free of arbitrage.

For practical applications, market models are
of a high importance. In this kind of models, traded securities have a finite set of maturities while in
the forward rate approach all possible maturities are considered. For the pricing of options respectivly calibration of the model one imposes a simple
dynamics for suitable quantities and uses the conditions for absence of
arbitrage which results in tractable pricing formulas.

The main goal of this work is to provide conditions
which guarantee the absence of arbitrage in a general class of models and to study market models
embedded in this framework.
The new contribution of this work is the following: first, we
consider models  where the driving random process is a
\Levy process, generalizing \citeN{FilipovicSchmidtOverbeck}. We
derive general drift conditions which ensure that the market is free
of arbitrage. In a next step we consider market models similar to those in \citeN{GrbacEberleinSchmidt}.
However, we embed the market models in the forward rate models and derive the resulting drift conditions.
The risk-free case studied in \citeN{BGM} turns out to be a special case.

\section{Collateralized Debt Obligations}
Consider a complete filtered probability space $(\Omega,\cF,(\cF_t)_{t \ge 0},\P)$ satisfying the usual conditions.
Mathematically speaking, a collateralized debt obligation is  a derivative on a portfolio of $N$ credit risky securities.
With each security there is an associated nominal and we assume that the total nominal is one.
Denote the process of accumulated losses over time by $L=\stpr{L}$.
Then $L$ is a pure-jump process which jumps upward at  defaults of the securities in the pool
and the jump size is  the occurring loss.
As the total nominal is one,  $L_t \in [0,1]$ for all $t \ge 0$.
A special case, quite often considered in practice, is when the loss on each default is a constant.
By $\cI\subset[0,1]$ we denote the set of attainable loss fractions and we assume that
$\cI=[0,1]$. The case where $\cI$ is finite may be considered analogously, see \citeN{FilipovicSchmidtOverbeck}.

Similar to \citeN{FilipovicSchmidtOverbeck} we consider \emph{$(T,x)$-bonds} as basic constituents:
 a $(T,x)$-bond pays $\ind{L_T \le x}$ at maturity $T$, for $x \in \cI$. Its price at time $t$ is denoted by $P(t,T,x)$.
For $x=1$ we obtain that $P(t,T,1)=:P(t,T)$ equals the risk-free bond. In \citeN{GrbacEberleinSchmidt}, Section 6.1, it is
shown how to extract $(T,x)$-bonds from observed CDO quotes under common assumptions.

\paragraph{Pricing CDOs with (T,x)-bonds.}
We consider $(T,x)$-bonds as they are
sufficient to provide prices for CDOs and similar derivatives as we will show in this section.
First of all,   $(T,x)$-bonds are sufficient for
pricing European derivatives on the loss process. Indeed, consider a
payoff $h(L_T)$ such that
$$ h(L_T)= h(1)-\int_0^1 h^\prime(y) \ind{L_T \le y} dy. $$
Linearity  of prices on contingent claims implies that the price at time $t$ of the  derivative offering $h(L_T)$ at $T$ is
given by the following functional of $(T,x)$-bonds
$$h(1) P(t,T) - \int_0^1 h^\prime(y) P(t,T,y) dy. $$
A more general result holds true. Investing in CDOs is
done via a so-called \emph{single-tranche CDO} (STCDO), sometimes also
called tranche credit default swap. A STCDO is represented by its
lower and upper detachment points, $x_1$ and $x_2$, with  $0\le x_1
< x_2 \le 1$. The investor receives coupon payments at  times $
T_1,\dots,T_n$. In exchange to this, the investor covers a certain
part of the occurring losses in each period. Set
$$ H(x):=(x_2-x)^+ - (x_1-x)^+ = \int _{(x_1,x_2]}\ind {x \le y} dy. $$
Then, investing in the STCDO with swap rate $S$ is equivalent to the following payment
stream:
\begin{enumerate}
\item \emph{Payment leg.} The investor receives  $S\, H(L_{T_i})$ at times $T_i, i=1,\dots,n$.
\item \emph{Default leg.} The investor pays $-dH(L_t)=H(L_{t-})-H(L_t)$ at any time where $\Delta L_t \not = 0$.
\end{enumerate}
In \citeN{FilipovicSchmidtOverbeck} it is shown that the value of the STCDO at time $t$ can be derived solely
on the basis of $(T,x)$-bonds. In the case where risk-free and risky $(T,x)$-bonds are independent, it follows from
Lemma 4.1 therein that the value of the STCDO at time $t$ is given by
 \begin{align*}
V(t,S)= \int_{(x_1,x_2]}\Bigg( &S \sum_{i=1}^n P(t,T_i,y)+ P(t,T_n,y)-P(t,T_0,y) \\
&+\int_{T_0}^{T_n}  f(t,u) P(t,u,y)   du   \Bigg)dy,
\end{align*}
where $f(t,u)$ denotes the risk-free forward rate.
Setting $V=0$ and solving for $S$ gives the par-spread  for this investment. Market prices
of STCDOs are typically quoted via the par-spread.


\section{Arbitrage-free term structure movements}
In this article we consider term structure movements of $(T,x)$-bonds given by
\begin{align} \label{eq:TxbondsViaForwardRates}
 P(t,T,x) = \ind{L_t \le x} \exp \Big(-\int_t^T f(t,u,x) du \Big),
 \end{align}
 where $f(t,T,x)$ is the $(T,x)$-\emph{forward rate} prevailing at $t$.
Let us assume that
\begin{description}
\item[(A1)] $L_t= \sum_{s\le t} \Delta L_s$ is c\`adl\`ag, non-decreasing, adapted, pure jump process,
which admits an absolutely continuous compensator $\nu^L(t,dx)dt$ satisfying
$\int_0^t\nu^L(s,\cI)ds<\infty$ (finite activity).
\end{description}
As shown in \citeN{FilipovicSchmidtOverbeck}, under {(A1)}, the indicator process $(\ind{L_t\le x})_{t \ge 0}$ is c\'adl\'ag and
has intensity
\begin{equation}\label{deflambda}
  \lambda(t,x)= \nu^L(t,(x-L_{t},1]).
\end{equation}
  That is,
  \begin{equation}\label{eqMx}
    M^x_t= 1_{\{ L_t\le x\}} +\int_0^t 1_{\{ L_{s}\le x\}}
    \lambda(s,x)\,ds
  \end{equation}
  is a martingale. Moreover, $\lambda(t,x)$ is decreasing and continuous in
  $x$ with $ \lambda(t,1)=0$.

Consider a $d$-dimensional L\'evy process $Z$.
Denote by $\langle  \cdot, \cdot \rangle$ the scalar product in
$\R^d$, by $\cdot^\top$ the transpose and by $|\cdot|$ the
respective norm on $\R^d$.
It is well-known that a L\'evy-process can be decomposed as
\begin{align}\label{dec:Z}
Z_t &= mt+W_t + \int_0^t\int_{|z| \le 1} z \, ( \mu(ds,dz) -\nu(dz)ds)
 +\int_0^t \int_{| z | > 1} z \, \mu(ds,dz),
\end{align}
where $m \in \RR^d$, $W$ is a  $d$-dimensional Wiener process with covariance matrix $\Sigma$  and $\mu$ is the random measure of jumps
with its $\P$-compensator $\nu(dz)ds $.
That is, for any Borel set $B$ of $\RR^+$  and any Borel set $\Lambda$  of $\R^d$,
$\mu$ denotes the number of jumps in the time interval $B$ which have sizes in $\Lambda$,
$$\mu(\omega; B,\Lambda)=\sum_{s\in B}1_{\Lambda}(\Delta Z_s).$$
The process $Z$ has exponential moments  if the Laplace transform is always finite. Then the
Laplace transform satisfies
$\E(\, e^{ -\langle u,Z_t \rangle} ) = e^{t J(u)}$ for all $u\in \R^d$ with
\begin{align}\label{def:J}
J(u) = - \langle m ,u \rangle  +  \half \langle \Sigma u,u \rangle + \int_{\R^d} \Big( e^{-\langle u,z\rangle} -1 + \ind {|z| \le 1 }(z) \, \langle u,z \rangle \Big) \nu(dz).
\end{align}
We assume throughout  that the $(T,x)$-forward rate is given by
\begin{align}\label{feq}
\begin{split}  f(t,T,x)=f(0,T,x)&+ \int_0^t a(s,T,x)ds +\int_0^t \langle b(s,T,x), dZ_s\rangle  \\
  &+\int_0^t\int_\Ical c(s,T,x;y)\,\mu^L(ds,dy)
\end{split}\end{align}
 where $\mu^L$ is the
random measure associated to the jumps of $L$. 
 By $[L,Z]$ we denote the covariation of $L$ and $Z$. Let
$$ B:=\Big\{ u \in \R^d: \int_{|z|>1} e^{-\langle u,z \rangle} \nu(dz)  < \infty\Big\}. $$
Additionally, we use the following assumptions.
\begin{description}
\item[(A2)]  For each $T,x$ the processes $a,b$, and $c$ are assumed to be predictable and have
bounded trajectories. Furthermore, $c(\cdot,T,1;y)=0$ for all $y \in \cI$.
\item[(A3)] $[L,Z]_t=0$ for all $t \ge 0$.
\item[(A4)] For any $r >0$ the function $u \mapsto \int_{|z|>1} e^{-\langle u,z \rangle} \nu(dz)$ is bounded on $\{u \in \R^d:|u| \le r\} \cap B$.
\item[(A5)] For each $(T,x)$ there exists $K(T,x)$ such that
$$ \sup_{0 \le s \le T, y \in \R^d} c(s,T,x;y) \le K(T,x). $$
\end{description}
The assumption on $c$ in (A2) states that losses in the considered portfolio do not influence the risk-free rate.
This assumption can be relaxed but at the cost of further notation. Assumption (A3) is natural from a modelling point of view:
jumps in $L$ influence $f$ only through $c$ and not via a dependence of $L$ and $Z$.
\paragraph{Contagion.} This framework allows for two kinds of \emph{contagion}, i.e.\
feedback of the loss process $L$ on the forward rates:
first, a direct contagion via simultaneous jumps of $L$ and $f$; when
$\Delta L_t \not  =0$, (A3) gives that
 $\Delta f(t,T,x)=c(t,T,x;\Delta L_t)$.
Second, a kind of indirect contagion via  letting the model parameters $a$, $ b$,
  and $c$ be explicit functions of the  loss path $L$. \\[-2mm]

It is well-known that then the market of $(T,x)$-bonds is free of arbitrage if
\begin{align}\label{EMM}
(D_{t} P(t,T,x))_{0 \le t \le T} \quad \text{are local martingales for all } (T,x),
\end{align}
where $D$ is the discounting process given by
\begin{align*} D_{t} = e^{-\int_0^t r_s ds}= e^{-\int_{0}^t f(s,s,1)ds}.\end{align*}
The goal of this section is to give conditions which are sufficient
for \eqref{EMM} to hold.

\paragraph{Default-free market.}
The default-free forward rate, $f(t,T)$, is given by $$f(t,T)=f(t,T,1).$$ We also denote $a(t,T)=a(t,T,1)$ and $b(t,T)=b(t,T,1)$.
In the case of default free markets, the following was
shown in
\citeN{JakubowskiZabczyk:EMHJM}, Theorem 3.1.

\noindent On one side, (A2) and the no-arbitrage condition
\begin{align}\label{EMM:risk-free}
D_{t} P(t,T) \quad \text{are local martingales for all } 0 \le t \le
T;
\end{align}
imply that for any $u \le T$
\begin{align}\label{Bcondition}
\int_t^u b(t,v) dv \in B \qquad \text{for almost all } t \in [0,u].
\end{align}
On the other side, if (A2), (A3) and \eqref{Bcondition} hold, then \eqref{EMM:risk-free} is equivalent to
\begin{align}\label{driftcondition-riskfree}
    \int_t^s a(t,u) du  = J\Big( \int_t^s b(t,u) du\Big)
\end{align}
for almost all $0 \le t \le s \le T $.

\paragraph{The drift conditions.}
Now we are in the position to state the model for the defaultable market and derive the drift conditions.
Recall that we consider a market consisting of $(T,x)$-bonds.

\begin{thm} \label{thm1}
Assume that (A1)-(A4) hold.
\begin{enumerate}\renewcommand{\labelenumi}{(\roman{enumi})\ }
\item If (A5) holds, it follows from \eqref{EMM}  that
\begin{align} \label{Bxcondition}
\int_t^s b(t,v,x) dv \in B
\end{align}
for any $0 \le t \le s$ on $\{L_t \le x\}$, $\Q\otimes dt$-a.s. \\
\item If \eqref{Bxcondition} holds, then \eqref{EMM} is equivalent to
\begin{align}
    \int_t^s a(t,u,x) du  & = J\Bigg( \int_t^s b(t,u,x) du\Bigg)   \nonumber \\
    & + \int_\Ical\left(e^{-\int_t^s c(t,u,x;y)\,du}-1\right)\ind{ L_t+y\le x}\nu^L(t,dy)
 \label{dc1}\\
    f(t,t,x) &= f(t,t)+\lambda(t,x) \label{dc2}
\end{align}
for any $0 \le t \le s$, where \eqref{dc1} and \eqref{dc2} hold on $\{L_t \le x\}$, $\Q\otimes dt$-a.s.
\end{enumerate}

\end{thm}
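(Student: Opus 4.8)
The strategy is to apply the Itô formula to the discounted bond price $D_t P(t,T,x)$ and to identify the resulting semimartingale decomposition, then to force the finite-variation part to vanish (which is the martingale requirement \eqref{EMM}).

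First I would fix $(T,x)$ and write $P(t,T,x)=\ind{L_t\le x}\exp(-Y(t,T,x))$ where $Y(t,T,x)=\int_t^T f(t,u,x)\,du$. Differentiating $Y$ in $t$ using the stochastic Fubini theorem (legitimate under (A2), which gives bounded predictable integrands) produces $dY(t,T,x)=-f(t,t,x)\,dt + \big(\int_t^T a(t,u,x)\,du\big)dt + \big\langle\int_t^T b(t,u,x)\,du,dZ_t\big\rangle + \int_{\Ical}\big(\int_t^T c(t,u,x;y)\,du\big)\mu^L(dt,dy)$. Abbreviating $A(t):=\int_t^T a(t,u,x)\,du$, $B(t):=\int_t^T b(t,u,x)\,du$, $C(t,y):=\int_t^T c(t,u,x;y)\,du$, I then apply Itô's formula to $e^{-Y}$, keeping track of: the drift term $-f(t,t,x)+A(t)$ and $\half\langle\Sigma B(t),B(t)\rangle$ from the continuous martingale part of $Z$ (recall $W$ has covariance $\Sigma$); the compensated and large-jump contributions from $\mu(ds,dz)$, which after writing $\exp(-\langle B(t),z\rangle)-1$ and compensating against $\nu(dz)\,ds$ yield exactly the term $-\langle m,B(t)\rangle + \half\langle\Sigma B(t),B(t)\rangle + \int_{\R^d}(e^{-\langle B(t),z\rangle}-1+\ind{|z|\le1}\langle B(t),z\rangle)\nu(dz) = J(B(t))$ in the drift (this is precisely the Lévy-measure bookkeeping that also forces \eqref{Bxcondition}, since $e^{-Y}$ must be integrable — here (A5) together with (A4) guarantees the jump integrals converge); and the $L$-jump contribution $\int_{\Ical}(e^{-C(t,y)}-1)\mu^L(dt,dy)$, whose compensator is $\int_{\Ical}(e^{-C(t,y)}-1)\nu^L(t,dy)\,dt$. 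Because of (A3), there is no cross term between $Z$ and $L$, so these pieces simply add.

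Next I would handle the indicator factor: by \eqref{eqMx}, $\ind{L_t\le x} = M^x_t - \int_0^t \ind{L_{s-}\le x}\lambda(s,x)\,ds$, so $d\ind{L_t\le x}$ contributes a martingale increment plus the drift $-\ind{L_{t-}\le x}\lambda(t,x)\,dt$; and there is a jump term coming from the product rule, namely $\Delta(\ind{L_t\le x})\cdot(e^{-Y(t-,T,x)}\text{-type factor})$, which on $\{L_{t-}\le x\}$ is nonzero only when $L_{t-}+y>x$. Combining the indicator-jump contribution with the $L$-jump contribution of $e^{-Y}$ (both driven by $\mu^L$) and taking compensators, the net $\mu^L$-drift on $\{L_{t-}\le x\}$ becomes $\int_{\Ical}\big[(e^{-C(t,y)}-1)\ind{L_{t-}+y\le x} - \ind{L_{t-}+y>x}\big]\nu^L(t,dy)\,dt$; using $\lambda(t,x)=\nu^L(t,(x-L_{t-},1])=\int_{\Ical}\ind{L_{t-}+y>x}\nu^L(t,dy)$ from \eqref{deflambda}, the $-\ind{L_{t-}+y>x}$ piece exactly cancels part of the $-\lambda(t,x)$ term, leaving $-\lambda(t,x)+\lambda(t,x)+\int_{\Ical}(e^{-C(t,y)}-1)\ind{L_{t-}+y\le x}\nu^L(t,dy)$... and more carefully, I should also fold in the $-r_t = -f(t,t,1)$ drift coming from $D_t$. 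Collecting everything, the total finite-variation part of $D_tP(t,T,x)$ on $\{L_{t-}\le x\}$ equals $D_tP(t-,T,x)$ times $\big[-f(t,t,1) - f(t,t,x) + f(t,t,x)_{\text{from }-f(t,t,x)\text{ in }dY}\ldots\big]$ — organizing this, the $dt$-bracket is $\big(A(t) - f(t,t,x) + f(t,t,1)\big) - J(B(t)) - \int_{\Ical}(e^{-C(t,y)}-1)\ind{L_{t-}+y\le x}\nu^L(t,dy) + \lambda(t,x) - \lambda(t,x)$, wait — the clean statement is that requiring this bracket to vanish $\Q\otimes dt$-a.s. on $\{L_t\le x\}$ splits into the ``$s<T$'' differential identity \eqref{dc1} (differentiate in $s$ the integrated form $\int_t^s a = J(\int_t^s b) + \int_{\Ical}(e^{-\int_t^s c}-1)\ind{L_t+y\le x}\nu^L$) and the ``boundary at $s=t$'' identity, which is $f(t,t,x)=f(t,t)+\lambda(t,x)$, i.e.\ \eqref{dc2}. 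For the converse direction one simply reverses: assuming \eqref{dc1}–\eqref{dc2} and \eqref{Bxcondition}, the bracket is identically zero, so $D_tP(t,T,x)$ is a local martingale, establishing the equivalence in part (ii). For part (i), the integrability needed to even make sense of $J(B(t))$ (finiteness of $\int_{|z|>1}e^{-\langle B(t),z\rangle}\nu(dz)$) is forced by the requirement that the candidate compensator be finite, i.e.\ $B(t)\in B$, which under (A4)–(A5) is exactly \eqref{Bxcondition}.

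The main obstacle I anticipate is the careful bookkeeping of the jump terms: one must correctly separate the $\mu$-driven jumps (of $Z$) from the $\mu^L$-driven jumps (of $L$), justify the stochastic Fubini interchange and the compensation of the large-jump part of the exponential integrand (this is where (A4) and (A5) are essential, ensuring $\int(e^{-\langle B(t),z\rangle}-1)\nu(dz)$ and $\int(e^{-C(t,y)}-1)\nu^L(t,dy)$ are well defined and that the local-martingale parts are genuine local martingales), and then correctly combine the product-rule jump cross-term between the indicator $\ind{L_t\le x}$ and the exponential $e^{-Y(t,T,x)}$ with the compensator $\lambda(t,x)$ from \eqref{eqMx}. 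Matching the sign and indicator structure so that the $\ind{L_t+y\le x}$ in \eqref{dc1} and the $\lambda(t,x)$ in \eqref{dc2} emerge exactly as stated is the delicate point; everything else is the standard HJM drift-condition computation adapted from \citeN{JakubowskiZabczyk:EMHJM} and \citeN{FilipovicSchmidtOverbeck}.
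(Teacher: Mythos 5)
Your plan for part (ii) — It\^o on $D_tP(t,T,x)=D_t\ind{L_t\le x}p(t,T,x)$, compensate the random measures, set the $dt$-bracket to zero, and split the resulting identity into the $s=t$ boundary relation \eqref{dc2} and the integrated relation \eqref{dc1} — is exactly the paper's argument, and that half of your proposal would go through. The problem is part (i), where your proposal contains no actual proof. You write that the finiteness of $\int_{|z|>1}e^{-\langle B(t),z\rangle}\nu(dz)$ ``is forced by the requirement that the candidate compensator be finite,'' and elsewhere that ``(A5) together with (A4) guarantees the jump integrals converge.'' Both statements beg the question. (A4) and (A5) do not imply \eqref{Bxcondition}: (A4) is only a local boundedness statement \emph{on} $B$, and (A5) bounds $c$, not $b$. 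And nothing a priori requires the candidate compensator to be finite — the jump integral $\int_0^t\int_{|z|>1}e^{-\langle b^*(s,T,x),z\rangle}\mu(ds,dz)$ is pathwise well defined by finite activity of the large jumps, and $D_tP(t,T,x)$ is a perfectly well-defined process regardless; the content of (i) is precisely that it cannot be a \emph{local martingale} unless the compensator is finite. The paper's proof extracts this as follows: $D_tP(t,T,x)\ge 0$ is a local martingale, hence a supermartingale, so $\sup_{s\le T}\E(D_sP(s-,T,x))\le\E(D_0P(0,T,x))<\infty$; one then decomposes $D_tP$ into a locally integrable part ($A+\tilde M$) plus four jump integrals $I_1+I_2-I_3+I_4$, shows $I_3$ and $I_4$ are locally integrable using this uniform bound (with (A5) controlling $e^{-c^*}-1$ in $I_4$), concludes $I_1+I_2$ is locally integrable, and finally uses that both integrands are \emph{nonnegative} to separate the two and read off $\E\int_0^{\tau_n}\int_{|z|>1}e^{-\langle b^*(s,T,x),z\rangle}\nu(dz)\,ds<\infty$, which is \eqref{Bxcondition}. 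This supermartingale-plus-nonnegativity argument is the essential idea of part (i) and is absent from your proposal.

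A secondary issue: your bookkeeping of the $\mu^L$-terms in part (ii) is muddled at the point where you claim ``the $-\ind{L_{t-}+y>x}$ piece exactly cancels part of the $-\lambda(t,x)$ term.'' No such cancellation occurs. The covariation $[\ind{L\le x},Dp]$ contributes $-\ind{L_{t-}+y>x}\left(e^{-c^*(t,T,x;y)}-1\right)$ (note the exponential factor), which combines with the $\ind{L_{t-}\le x}\left(e^{-c^*}-1\right)$ term from $d(Dp)$ to produce the indicator $\ind{L_{t-}+y\le x}$ in \eqref{dc1}; the compensator term $-\lambda(t,x)$ from $M^x$ survives intact and is what produces \eqref{dc2} when you set $T=t$. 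Your final bracket, which contains ``$+\lambda(t,x)-\lambda(t,x)$,'' would lose this term and yield $f(t,t,x)=f(t,t)$ instead of \eqref{dc2}; you recover the correct statement only by asserting it. This is repairable, but as written the computation does not produce the claimed conditions.
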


\begin{proof}
Set
\begin{align*}
  p(t,T,x) &:=\exp \Bigg( - \int_t^{T} f(t,u,x) \, du \Bigg),
\end{align*}
such that $P(t,T,x) = \ind{L_t\le  x} p(t,T,x)$. Recall
the martingale $M^x$ from \eqref{eqMx}. Then
\begin{align}
  d \big( D_t P(t,T,x)\big)   
              &= D_t p(t-,T,x) dM^x_t - D_t p(t-,T,x) \lambda(t,x) \ind{L_{t}\le x} \, dt  \nonumber\\
                            &+ \ind{L_{t-}\le x } d(D_t p(t,T,x))
                            + d [\ind{L_t \le x}, D_t p(t,T,x)]  . \label{temp644}
\end{align}
Denote  $a^*(t,T,x) := \int_t^T a(t,u,x) du$, $b^*(t,T,x) := \int_t^T b(t,u,x) du$,  and similarly $c^*(t,T,x;y) := \int_t^T c(t,u,x;y) du$.
We first compute the last term in \eqref{temp644}. Note that the process $(\ind{L_t\le x})_{t \ge 0}$ jumps at most  once, from $1$ to $0$, at the first time when $L$ crosses
the barrier $x$.
 Hence,
\begin{align}
\lefteqn{[\ind{L_\cdot \le x}, D_\cdot p(\cdot,T,x)]_t = \sum_{0 \le s \le t} \Delta  \ind{L_s \le x} \, \Delta (D_s p(s,T,x)) }\qquad\quad \nonumber\\
&= - \sum_{0 \le s \le t} \ind{ L_{s-} \le x, L_s > x} \, \Delta (D_s p(s,T,x)) \nonumber\\
&= - \int_0^t  \int_\cI  \ind{L_{s-} \le x} D_s  p(s-,T,x)  \ind{ L_{s-}+y > x}  \left(e^{-c^*(s,T,x;y)}-1 \right) \mu^L(ds,dy),
\label{temp514}
\end{align}
where we used in the last step that by (A3), $Z$ and $L$ do not have simultaneous jumps.
Regarding the remaining term in \eqref{temp644},   we obtain
from \eqref{eq:TxbondsViaForwardRates} and \eqref{feq}, by the \Ito-formula, that
\begin{align}
\ind{L_{t-}\le x} d \big(D_tp(t,T,x) \big) \nonumber
& =  D_{t} p(t-,T,x) \ind{L_{t-}\le x} \\
&\Bigg(\Big(f(t,t,x)- r_t -  \langle b^*(t,T,x),m\rangle -a^*(t,T,x)\Big) dt \nonumber\\[2mm]
&+ \half  b^*(t,T,x)^\top \, \Sigma \, b^*(t,T,x) dt  \nonumber\\[2mm]
& +  \int _{ \R^d} \Big[ e^{-  b^*(t,T,x)^\top  z } -1+b^*(t,T,x)^\top z \Big]\ind{|z| \le 1} \,  \mu(dt,dz)  \nonumber\\
& +  \int _{ \R^d} \Big[ e^{-  b^*(t,T,x)^\top  z } -1 \Big]\ind{|z| > 1} \,  \mu(dt,dz)  \nonumber\\
& +  \int _{ \cI} \ind{L_{t-}\le x} \Big[ e^{-  c^*(t,T,x;y)  } -1\Big] \,   \mu^L(dt,dy)  \Bigg)\nonumber \\
& + d\tilde M_t \label{eq:dynamicsDp}
  \end{align}
where $\tilde M$ is a local martingale and we used the decomposition \eqref{dec:Z}.
Inserting \eqref{temp514} and \eqref{eq:dynamicsDp} in \eqref{temp644} we obtain, using  $\ind{L_{t-}\le x} p(t-,T,x) = P(t-,T,x)$, that
\begin{align}
d\big(  D_tP(t,T,x)\big) &=  D_t P(t-,T,x) \nonumber \\
&\Bigg(  dM^x_t -   \lambda(t,x) \ind{L_{t}\le x} \, dt  \nonumber\\
&\Big(f(t,t,x)- r_t -  \langle b^*(t,T,x),m\rangle -a^*(t,T,x)\Big) dt \nonumber\\[2mm]
&+ \half  b^*(t,T,x)^\top \, \Sigma \, b^*(t,T,x) dt  \nonumber\\[2mm]
& +  \int _{ \R^d} \Big[ e^{-  b^*(t,T,x)^\top  z } -1+b^*(t,T,x)^\top z \Big]\ind{|z| \le 1} \,  \mu(dt,dz)  \nonumber\\
& +  \int _{ \R^d} \Big[ e^{-  b^*(t,T,x)^\top  z } -1 \Big]\ind{|z| > 1} \,  \mu(dt,dz)  \nonumber\\
& +  \int _{ \cI}  \ind{L_{t-}+y \le x}\Big( e^{-  c^*(t,T,x;y)  } -1\Big) \,   \mu^L(dt,dy)  \Bigg)+d \tilde{\tilde M}_t \label{eq:dynamicsDP}
  \end{align}
  with a local martingale $\tilde{\tilde M}$.  First we study (i). From \eqref{eq:dynamicsDP} we obtain that
\begin{align}
 D_tP(t,T,x) &= D_0P(0,T,x)  + A(t)+ {\tilde{\tilde {\tilde M}}}_{t}  \nonumber \\
&+ \int_0^t  \int _{ \R^d}D_{s}P(s-,T,x) \Big( e^{-  b^*(s,T,x)^\top  z } -1+b^*(s,T,x)^\top z \Big) \ind{|z| \le 1} \,  \mu(ds,dz)  \nonumber\\
&+ \int_0^t  \int _{ \R^d} D_{s}P(s-,T,x)\Big( e^{-  b^*(s,T,x)^\top  z } \Big) \ind{|z| > 1}  \,  \mu(ds,dz)  \nonumber\\
&- \int_0^t  \int _{ \R^d} D_{s}P(s-,T,x)\ind{|z| > 1}  \,  \mu(ds,dz)  \nonumber\\
& +  \int_0^t \int _{ \cI}D_{s}P(s-,T,x)  \ind{L_{s-}+y \le x}\Big( e^{-  c^*(s,T,x;y)  } -1\Big) \,   \mu^L(ds,dy)  \Bigg) \nonumber\\
&=: D_0P(0,T,x)  + A(t)+ {\tilde{\tilde {\tilde M}}}_{t}+ I_1(t) +  I_2(t) - I_3(t) + I_4(t). \label{def:I}
\end{align}
Here ${\tilde{\tilde {\tilde M}}}_{t}$ is a local martingale and $A(t)$ is a $dt$ integral and thus they both are locally integrable processes. The no-arbitrage condition \eqref{EMM} implies that the sum of all those six processes is a local martingale and in particular  the sum
$$
I_1(t) +  I_2(t) - I_3(t) + I_4(t)
$$ 
is locally integrable.
We consider each term separately.
Since $(D_tP(t,T,x))_{0 \le t \le T}$ is a nonnegative local martingale, it is a supermartingale and hence
\begin{align}
\label{intDP}
\sup_{0 \le s \le T} \E(D_sP(s-,T,x))\leq \sup_{0 \le s \le T} \E(D_sP(s,T,x)) \le \E(D_0P(0,T,x)) < \infty
\end{align}
the last expectation being finite by \eqref{EMM}. The process  
$$
I_ 3(t) :=  \int_0^t  \int _{ \R^d} D_s P(s-,T,x) \ind{|z| > 1}  \,  \mu(ds,dz)
$$ 
is  locally integrable. To see this let  $(\tau_n)$ be an increasing sequence of stopping times dominated by $T$.    Then
\begin{align}\label{temp574}
   \E \bigg( \int_0^{\tau_n} \int_{\R^d}  & D_s P(s-,T,x) \ind{|z| > 1}  \mu(ds,dz) \bigg)      \\
&\leq  \E \left( \int_0^T \int_{\R^d} D_s P(s-,T,x) \ind{|z| > 1}  \, \mu(ds,dz) \right) \nonumber\\
&=   \int_0^T \int_{\R^d} \E\left( D_s P(s-,T,x)
 \right)\ind{|z| > 1}  \nu(dz) ds.
\nonumber
\end{align}
Since the integrands are nonnegative, we obtain with \eqref{intDP} that
\begin{align*}
\E \bigg( \int_0^{\tau_n} \int_{\R^d}  & D_s P(s-,T,x) \ind{|z| > 1}  \mu(ds,dz) \bigg) & \le T  \E(D_0P(0,T,x)) \nu(\{z \in \R^d: |z|>1\}) < \infty. \end{align*}
Next, we consider $I_4$.  
By (A5)
$$ \Big| e^{-c^*(s,T,x;y)}Ê- 1 \Big| \le e^{T K(T,x)} + 1 =: \tilde K(T,x)$$
and hence
\begin{align*}
&\E\left( \int_0^{\tau_n} \int_{\R^d}  D(s) P(s-,T,x) \ind{L_{s-}+y \le x}\Big| e^{-  c^*(s,T,x;y)  } -1\Big| \,   \mu^L(ds,dy)  \right) \\
& \qquad\le \tilde K(T,x) \E \left( \int_0^T \int_{\R^d} \Ind_{[0,\tau_n]}(s) D(s) P(s-,T,x) \ind{L_{s-}+y \le x}   \mu^L(ds,dy)  \right) \\
& \qquad\le \tilde K(T,x)\int_0^T \int_{\R^d} \E \left(   D(s) P(s-,T,x)     \right) \nu^L(s,dy),
\end{align*}
which is finite by \eqref{intDP}, (A1) and the second part in (A5).
Summarizing, the no-arbitrage condition \eqref{EMM} implies that the sum $I_1+I_2 $ is locally integrable.
However
\begin{align*}
I_1 + I_2  &=
 \int_0^t  \int _{ \R^d} \Big( e^{-  b^*(s,T,x)^\top  z } -1+b^*(s,T,x)^\top z \Big) \ind{|z| \le 1} \,  \mu(ds,dz)  \nonumber\\
&+ \int_0^t  \int _{ \R^d} e^{-  b^*(s,T,x)^\top  z }  \ind{|z| > 1}  \,  \mu(ds,dz).
\end{align*}
Since both integrands are nonnegative, this sum is locally  integrable if and only if both summands are locally integrable and
so $\int_0^t  \int _{ \R^d} e^{-  b^*(s,T,x)^\top  z }  \ind{|z| > 1}  \,  \mu(ds,dz)$ is locally integrable. 
Therefore, for a localizing sequence $\tau_n$ 
$$
\E \int_0^T  \int _{ \R^d} e^{-  b^*(s,T,x)^\top  z }\ind{s\leq \tau_n}\ind{|z| > 1}\mu (ds,dz) < +\infty .  
$$
Equivalently
$$
 \E \Big[ \int_0^{\tau_{n}} \big( \int _{|z| >1}  e^{-  b^*(s,T,x)^\top z }\nu (dz)\big)ds\Big]  < +\infty.
$$
This implies  \eqref{Bxcondition} and assertion (i) follows.

We consider now (ii).  Note that $D_t p(t-,T,x) >0$.
Compensating   in (\ref{eq:dynamicsDp}) the integrals
with respect to the random measures $\mu$ and $\mu^L$,  collecting
the 'dt'-terms and dividing by $D_t p(t-,T,x)>0$ gives
 \begin{align}\ind{L_t \le x}\Bigg[
&    f(t,t,x)- r_t -\lambda(t,x)
-  \langle b^*(t,T,x),m\rangle -a^*(t,T,x) \nonumber\\
&+ \half  b^*(t,T,x)^\top \, \Sigma \, b^*(t,T,x)  \nonumber\\[2mm]
& +  \int _{ \R^d} \Big[ e^{-  b^*(t,T,x)^\top  z } -1+\ind{|z| \le 1}b^*(t,T,x)^\top z \Big] \,  \nu(dz) \nonumber\\
& +  \int _{ \cI} \ind{L_{t}+ y \le x }\Big( e^{-  c^*(t,T,x;y)  } -1\Big)  \,  \nu^L(t,dy) \Bigg] dt. \label{eq:drift}
\end{align}
$Dp$ being a local martingale is equivalent to having a vanishing drift.
From \eqref{eq:drift}  we  obtain the following   condition:
\begin{align}
0
&=     f(t,t,x)- r_t -\lambda(t,x) -  a^*(t,T,x)  + J(b^*(t,T,x)) \nonumber\\
& +  \int _{ \cI} \Big( e^{-  c^*(t,T,x;y)  } -1\Big) \ind{L_{t}+ y \le x } \,  \nu^L(t,dy), \label{temp586}
\end{align}
$dt \otimes \Q$-almost surely for all $T \ge t$. Let $N\in \cB(\R^+)\otimes \cF$ be the  set of all $(t,\omega)$ for which \eqref{temp586} holds.
For $(t,\omega)\in N$ we choose  $T= t$ and hence  obtain \eqref{dc2} as $J(0)=0$.
The remaining terms give \eqref{dc1}, where both
equations hold only on $N$, i.e.\ $dt \otimes \Q$-almost surely.

For the converse, we need to show that  the drift conditions imply that all discounted $(T,x)$-bonds are local martingales.
For fixed $x$, the drift conditions imply that the 'dt'-terms in \eqref{eq:dynamicsDP} vanish (compare \eqref{eq:drift})
and hence $(D_tP(t,T,x))_{0 \le t \le T}$  are local martingales. The conclusion follows.
\end{proof}


Theorem~\ref{thm1} states that, under the no-arbitrage
condition~\eqref{EMM}, the {\em drift}
$a(t,\cdot,x)$ of the forward rates is determined by the {\em volatility}
$b(t,\cdot,x)$ and the \emph{contagion} $c(t,\cdot,x;\cdot)$. Besides this, \eqref{dc1} gives an
implicit relation between $f(t,t,x)$ and  $L_t$ or more precisely its compensator;
the relationship between $\lambda$ and the compensator $\nu$ is given in \eqref{deflambda}.

In \citeN{FilipovicSchmidtOverbeck} it is shown how to construct a conditional
Markov process which satisfies the drift conditions when the filtration is generated by a Brownian motion.
The extension to the case where the background filtration is generated by a
L\'evy process is studied in \citeN{TappeSchmidt}.

There is a rich literature on drift conditions in the defaultable and default-free case, see for example \citeN{BMKR},  \citeN{MusielaRutkowski}, \citeN{Filipovic},
 \citeN{TSchmidt_InfiniteFactors}, \citeN{SchmidtOezkan} and \citeN{Huehne}.

\section{Market Models}

For practical applications it is important to realize, that $(T,x)$-bond prices are neither available for all
maturities nor for all levels $x$. This chapter is devoted to the study of a
\emph{market model} where this assumption is relaxed. We start by deriving dynamics of $(T_k,x_i)$-rates in
section 4.1, and consider market models of the forward rate and the $(T_k,x_i)$-rate in sections 4.2 and 4.3, respectively.
The drift condition derived in Theorem \ref{thm1} is, however,  still sufficient for absence of arbitrage.
Recall that $(L_t)_{t \ge 0}$ was the loss process of the CDO and $P(t,T,x)=\ind{L_t \le x} p(t,T,x)$ with  $p(t,T,x)>0$.

We fix a tenor structure $0 < T_1 <\dots< T_n$  and a barrier structure $0 =x_1 < \dots < x_m = 1$. Set $\cS:=\{T_1,\dots,T_n\} \times \{x_0,\dots,x_m\}$. The considered market model consists of all $(T,x)$-bonds with $(T,x)  \in \cS$.
As $x_m=1$, the market also contains risk-free bonds. Throughout we assume that the initial term structure, denoted in $(T,x)$-bonds, is strictly positive and decreasing in $T$ and increasing in  $x$.

\begin{defin}  Set $\delta_k := T_{k+1}-T_k$. The rate
$$ L(t,T_k,x_i) := \ind{L_t \le x_i}\frac{1}{\delta_k} \left( \frac{p(t,T_k,x_i)}{p(t,T_{k+1},x_i)} -1 \right)$$
is called \emph{$(T_k,x_i)$-rate}.
For all $0\le t\le S\le T$ and $x \in \cI$,
\begin{align}\label{def:F}
F(t,S,T,x) := \ind{L_t \le x} \frac{p(t,S,x)}{p(t,T,x)}
\end{align}
defines the \emph{$(S,T,x)$-forward bond price}.
\end{defin}
The $(T_k,1)$-rate is the so-called LIBOR-rate. It  is a default-free
and has been studied in many papers, see e.g.~\citeN{FilipovicBook}. Defaultable LIBOR-rates have been studied in
\citeN{EberleinKlugeSchoenbucher06}. In this paper we follow \citeN{BGM} and embed the $(T_k,x_i)$-rates in the framework of continuous forward rates
and obtain sufficient conditions for absence of arbitrage.
In \citeN{GrbacEberleinSchmidt}  $(T_k,x)$-rates are studied directly; however they do not consider a discrete structure for the
loss levels.

\subsection{Dynamics of $(T_k,x_i)$-rates}
In this section we derive the dynamics of $(T_k,x_i)$-rates when the
drift condition \eqref{dc1} is satisfied. This result generalizes \citeN{eberlein.oezkan05} and \citeN{Huehne}.
Recall that $b^*(t,T,x) = \int_t^T b(t,u,x) du$.  
\begin{thm}
Assume that (A1)-(A4) and   \eqref{dc1} hold and consider $(T_k,x_i)\in \cS$. On $\{L_{t-} \le x_i\}$ we have that
\begin{align}\label{MM:L}
dL(t,T_k,x_i) &=
\frac{1+\delta_k L(t-,T_k,x_i)}{\delta_k}\bigg(
  D(t,T_k,T_{k+1},x_i) -\frac{\delta_k L(t-,T_k,x_i)}{1+\delta_k L(t-,T_k,x_i)}\lambda(t,x_i)\bigg) dt \nonumber \\
&+d\tilde M_t,
\end{align}
where the drift term attributed to the compensated jumps and the quadratic variation of the diffusive part is
\begin{align}
\lefteqn{D(t,T_k,T_{k+1},x_i)} \nonumber\\
 &:= \int_{\R^d} \Big( e^{ \langle b^*(t,T_{k+1},x_i) - b^*(t,T_{k},x_i), z \rangle  }- e^{- \langle b^*(t,T_k,x_i) , z \rangle}
 + e^{ -\langle b^*(t,T_{k+1},x_i) , z \rangle} -1 \Big) \nu(dz) \nonumber \\
&+ \int_\cI \Big(e^{c^*(t,T_{k+1},x_i;y)-c^*(t,T_k,x_i;y)}  - e^{-c^*(t,T_k,x_i;y)}+e^{-c^*(t,T_{k+1},x_i;y)}\Big) \ind{L_{t}+y \le x_i}\nu^L(t,dy) \nonumber\\
&+ \langle  \Sigma (b^*(t,T_{k+1},x_i) - b^*(t,T_k,x_i)), b^*(t,T_{k+1},x_i) \rangle
\label{def:DriftD}
\end{align} and
 $\tilde M=\tilde M(T_k,x_i)$ is the  local martingale
\begin{align} \label{MM:L2}
d\tilde M_t &:=  L(t-,T_k,x_i)  dM_t^x \nonumber\\
&+
\frac{1+\delta_k L(t-,T_k,x_i)}{\delta_k} \Bigg(\int_{\R^d} \left( e^{ \langle b^*(t,T_{k+1},x_i) - b^*(t,T_{k},x_i), z \rangle } -1  \right) (\mu(ds,dz)-\nu(dz)ds) \nonumber\\
&+\int_{\cI}\left( e^{c^*(t,T_{k+1},x_i;y)-c^*(t,T_k,x_i;y) } -1  \right)  \ind{L_{t-}+y\le x_i} (\mu^L(ds,dy)-\nu^L(dy)ds) \nonumber\\[2mm]
&+  \langle b^*(t,T_{k+1},x_i) - b^*(t,T_{k},x_i),  dW(t) \rangle \Bigg) .
\end{align}

\end{thm}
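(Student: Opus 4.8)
\emph{Strategy.}
Since $\delta_k L(t,T_k,x_i)=\ind{L_t\le x_i}(q_t-1)$ with $q_t:=p(t,T_k,x_i)/p(t,T_{k+1},x_i)=\exp\!\big(\int_{T_k}^{T_{k+1}}f(t,u,x_i)\,du\big)$, the plan is: (a) compute $dq_t$ from \eqref{feq} via the \Ito formula; (b) reinstate the indicator $\ind{L_t\le x_i}$ using the martingale \eqref{eqMx}; (c) restrict to $\{L_{t-}\le x_i\}$, where $q_{t-}=1+\delta_k L(t-,T_k,x_i)$, and divide by $\delta_k$.

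\emph{Step (a).}
Set $g_t:=\int_{T_k}^{T_{k+1}}f(t,u,x_i)\,du$. Stochastic Fubini in \eqref{feq} together with $\int_{T_k}^{T_{k+1}}=\int_t^{T_{k+1}}-\int_t^{T_k}$ gives
\[ dg_t=\bar a_t\,dt+\langle\bar b_t,dZ_t\rangle+\int_\cI\bar c_t(y)\,\mu^L(dt,dy), \]
with $\bar a_t=a^*(t,T_{k+1},x_i)-a^*(t,T_k,x_i)$, $\bar b_t=b^*(t,T_{k+1},x_i)-b^*(t,T_k,x_i)$, $\bar c_t(y)=c^*(t,T_{k+1},x_i;y)-c^*(t,T_k,x_i;y)$, in the integrated notation $a^*,b^*,c^*$ of Theorem~\ref{thm1}. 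Applying \Ito to $q=e^{g}$, expanding $dZ$ via \eqref{dec:Z}, and compensating the jump integrals, $\frac{dq_t}{q_{t-}}$ equals a local martingale driven by $W$, $\mu-\nu$ and $\mu^L-\nu^L$, plus the drift $\bar a_t+J(-\bar b_t)+\int_\cI(e^{\bar c_t(y)}-1)\,\nu^L(t,dy)$, where $J(-\bar b_t)$ collects the Gaussian and Lévy compensator terms through \eqref{def:J}. Now insert the drift condition \eqref{dc1} for \emph{both} $T_k$ and $T_{k+1}$: on $\{L_t\le x_i\}$ this replaces $\bar a_t$ by
\[ J(b^*(t,T_{k+1},x_i))-J(b^*(t,T_k,x_i))+\int_\cI\!\big(e^{-c^*(t,T_{k+1},x_i;y)}-e^{-c^*(t,T_k,x_i;y)}\big)\ind{L_t+y\le x_i}\,\nu^L(t,dy). \]
The algebraic heart is then the identity, immediate from \eqref{def:J},
\[ J(v)-J(u)+J(u-v)=\langle\Sigma(v-u),v\rangle+\int_{\R^d}\!\big(e^{\langle v-u,z\rangle}-e^{-\langle u,z\rangle}+e^{-\langle v,z\rangle}-1\big)\,\nu(dz), \]
which, with $u=b^*(t,T_k,x_i)$ and $v=b^*(t,T_{k+1},x_i)$, reproduces exactly lines~1 and~3 of \eqref{def:DriftD}; the $\nu^L$-integrals recombine into line~2. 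Hence the drift of $\frac{dq_t}{q_{t-}}$ equals $D(t,T_k,T_{k+1},x_i)$ up to terms carrying the indicator $\ind{L_t+y>x_i}$, which are cleared in step~(b).

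\emph{Step (b).}
The product rule yields $d(\ind{L_t\le x_i}q_t)=\ind{L_{t-}\le x_i}\,dq_t+q_{t-}\,d\ind{L_t\le x_i}+d[\ind{L_\cdot\le x_i},q_\cdot]_t$, so
\[ \delta_k\,dL(t,T_k,x_i)=\ind{L_{t-}\le x_i}\,dq_t+(q_{t-}-1)\,d\ind{L_t\le x_i}+d[\ind{L_\cdot\le x_i},q_\cdot]_t. \]
By \eqref{eqMx}, $d\ind{L_t\le x_i}=dM^{x_i}_t-\ind{L_{t-}\le x_i}\lambda(t,x_i)\,dt$, so on $\{L_{t-}\le x_i\}$ the coefficient $q_{t-}-1=\delta_k L(t-,T_k,x_i)$ produces the $-L(t-,T_k,x_i)\lambda(t,x_i)\,dt$ drift and the $L(t-,T_k,x_i)\,dM^{x_i}_t$ term of \eqref{MM:L}--\eqref{MM:L2}. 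The indicator $\ind{L_\cdot\le x_i}$ is pure jump and jumps once, from $1$ to $0$, precisely when $L$ crosses $x_i$; since by (A3) $Z$ and $L$ have no common jumps,
\[ [\ind{L_\cdot\le x_i},q_\cdot]_t=-\int_0^t\!\int_\cI\ind{L_{s-}\le x_i}\ind{L_{s-}+y>x_i}\,q_{s-}\big(e^{\bar c_s(y)}-1\big)\,\mu^L(ds,dy). \]
Compensating this bracket and merging it with the $\mu^L$-integral coming out of $dq_t$ turns the trivial indicator into $\ind{L_{s-}+y\le x_i}$: in the drift this supplies the indicator of line~2 of \eqref{def:DriftD} while the residual $\ind{L_t+y>x_i}$-terms from step~(a) combine with the compensator of the bracket to leave only $D(t,T_k,T_{k+1},x_i)$, and in the martingale part it is exactly the compensated $\mu^L$-integral of \eqref{MM:L2}. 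Finally, on $\{L_{t-}\le x_i\}$ substitute $q_{t-}=1+\delta_k L(t-,T_k,x_i)$ everywhere (it multiplies all of $\frac{dq_t}{q_{t-}}$) and divide by $\delta_k$: the $dt$-part is $\tfrac{1+\delta_k L(t-,T_k,x_i)}{\delta_k}D(t,T_k,T_{k+1},x_i)-L(t-,T_k,x_i)\lambda(t,x_i)$, i.e.\ \eqref{MM:L}, and the remainder is the local martingale \eqref{MM:L2}.

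\emph{Main obstacle.}
The conceptual steps are short; the work is the bookkeeping of the jump terms. The genuinely delicate point is the interplay between the $\mu^L$-integral produced by \Ito on $q=e^{g}$ (which before compensation carries \emph{no} barrier indicator) and the covariation bracket $[\ind{L_\cdot\le x_i},q_\cdot]$ (which carries the \emph{crossing} indicator $\ind{L_{s-}+y>x_i}$): only after compensating both does one land on the non-crossing indicator $\ind{L_{s-}+y\le x_i}$, with the correct signs, in \eqref{def:DriftD} and \eqref{MM:L2}. One must also keep track that \eqref{dc1}--\eqref{dc2} hold $\Q\otimes dt$-a.s.\ on $\{L_t\le x_i\}$ while the \Ito calculus naturally lives on $\{L_{t-}\le x_i\}$ — these differ only on a $dt$-null set, so the substitution is legitimate inside the drift — and verify the $J$-identity above exactly, since it is responsible for the asymmetric occurrence of $b^*(t,T_k,x_i)$ versus $b^*(t,T_{k+1},x_i)$, and of $c^*(t,T_k,x_i;\cdot)$ versus $c^*(t,T_{k+1},x_i;\cdot)$, in \eqref{def:DriftD}. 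A useful sanity check throughout is the default-free case $x_i=1$, $c\equiv0$, where $D$ must collapse to its Gaussian/Lévy part (the $\nu^L$-integral vanishing) and \eqref{MM:L}--\eqref{MM:L2} must reduce to the Libor dynamics of \citeN{BGM}. That the compensated integrals in \eqref{MM:L2} are true local martingales follows, as in Theorem~\ref{thm1}, from (A2), (A4), (A5) and \eqref{Bxcondition}.
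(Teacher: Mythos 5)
Your proposal is correct and follows essentially the same route as the paper's proof: stochastic Fubini and It\^o applied to $g=e^{\int_{T_k}^{T_{k+1}}f(t,u,x_i)\,du}$, the product rule with the indicator via $M^{x_i}$ and the covariation bracket, and the substitution of \eqref{dc1} at both maturities combined with the identity $J(v)-J(u)+J(u-v)=\langle\Sigma(v-u),v\rangle+\int_{\R^d}\big(e^{\langle v-u,z\rangle}-e^{-\langle u,z\rangle}+e^{-\langle v,z\rangle}-1\big)\,\nu(dz)$. One bookkeeping remark: carried out exactly as you describe, the $\nu^L$-line of $D$ acquires an additional $-1$ inside the bracket (mirroring the $-1$ in the $\nu$-line), which the displayed \eqref{def:DriftD} omits; your accounting is the consistent one, so this is a typo in the stated formula rather than a gap in your argument.
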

\begin{proof}
Fix $0<S<T$. First, we derive the dynamics of the pre-default $(S,T,x)$-forward bond price process
\begin{align} \label{def:g}
 g(t,S,T,x) := \frac{p(t,S,x)}{p(t,T,x)}.
\end{align}
With $x,S$ and $T$  fixed we denote   $A(t):= \int_S^T a(t,u,x) du,$  $B(t) := \int_S^T b(t,u,x) du$, and $C(t;y):=\int_S^T c(t,u,x;y) du$ .
By \eqref{def:g} and the dynamics of the forward rate $f$ given  in  \eqref{feq}, the stochastic Fubini theorem yields
\begin{align*}
g(t,S,T,x)
 &= e^{\int_S^T \left( f(0,u,x) + \int_0^t a(s,u,x) ds + \int_0^t \langle b(s,u,x), dZ_s \rangle +\int_0^t\int_\cI c(s,u,x;y) \mu^L(ds,dy)\right) du} \\
 &= g(0,S,T,x) \, e^{\int_0^t A(s) ds + \int_0^t \langle B(s), dZ(s) \rangle + \int_0^t\int_\cI C(s;y) \mu^L(ds,dy)}.
\end{align*}
The It\^o-formula gives an expression for the dynamics of $g$:
\begin{align}
dg(t,S,T,x) &=   g(t-,S,T,x) \Big( A(t) + \langle B(t), dZ(t) \rangle \Big)dt \nonumber\\
   &+ \half g(t-,S,T,x)  \langle \Sigma B(t), B(t) \rangle  dt \nonumber \\
& +  \Big( g(t,S,T,x) - g(t-,S,T,x) - g(t-,S,T,x) \langle B(t), \Delta Z(t) \rangle \Big) \nonumber \\[2mm]
&=: d I_1(t) + dI_2(t) + dI_3(t). \label{def:I123}
\end{align}
The decomposition of $Z$ in \eqref{dec:Z} yields
\begin{align*}
dI_1(t) &=   g(t-,S,T,x) \left( A(t) + \langle B(t), m \rangle \right) dt \\[2mm]
& + g(t-,S,T,x)\langle B(t), dW(t) \rangle \\
& +\int_{\parallel z \parallel \le 1} g(t-,S,T,x) \langle B(t) , z \rangle (\mu(dt,dz) - \nu(dz)dt) \\
& +  \int_{\parallel z \parallel > 1} g(t-,S,T,x) \langle B(t) , z \rangle \mu(dt,dz).
\end{align*}
By (A3), $L$ and $Z$ have no joint jumps and therefore
\begin{align*}
dI_3(t) &=  g(t-,S,T,x) \left( e^{ \langle B(t), \Delta Z(t) \rangle + C(t,\Delta L(t)) } -1 \right)
- g(t-,S,T,x) \langle B(t), \Delta Z(t) \rangle \\
&=  \int_{\R^d} g(t-,S,T,x) \left( e^{ \langle B(t), z \rangle } -1 - \langle B(t), z \rangle \right) \mu(dt,dz) \\
&+ \int_\cI g(t-,S,T,x) \left( e^{ C(t;y)} -1 \right) \mu^L(dt,dy),
\end{align*}

Inserting this expressions in \eqref{def:I123} we obtain
\begin{align}
\frac{dg(t,S,T,x)}{g(t-,S,T,x) } &=    \left( A(t) + \langle B(t), m \rangle + \half  \langle \Sigma B(t), B(t) \rangle\right) dt  +   \langle B(t), dW(t) \rangle \nonumber \\
&+\int_{\R^d}  \left( e^{ \langle B(t), z \rangle } -1  \right) (\mu(dt,dz) -\nu(dz)dt) \nonumber \\
&+ \int_{\R^d}  \left( e^{ \langle B(t), z \rangle } -1 - \ind{\parallel z \parallel \le 1}\langle B(t),z \rangle \right) \nu(dz)dt   \\
& +\int_{\cI} \left( e^{ C(t;y) } -1  \right) \mu^L(dt,dy)
. \label{temp221}
\end{align}
As $L(t,T_k,x) = \ind{L_t \le x}\delta_k^{-1}( g(t,T_k,T_{k+1},x)-1)$,
we have that
\begin{align*}
d L(t,T_k,x) &= \delta_k^{-1}( g(t-,T_k,T_{k+1},x)-1) d \ind{L_t\le x}  \\
&+ \ind{L_{t-} \le x}  \delta^{-1}_k dg(t,T_k,T_{k+1},x) \\
&+ d [\ind{L_t \le x}, \delta_k^{-1}( g(t,T_k,T_{k+1},x)-1) ]. \end{align*}
Note that $d\ind{L_t \le x}=dM_t^x - \ind{L_t \le x} \lambda(t,x) dt$ and,
similar to \eqref{temp514},
\begin{align*}
\lefteqn{d [\ind{L_t \le x}, \delta_k^{-1}( g(t,T_k,T_{k+1},x)-1) ] } \qquad \qquad\\
&=
\delta_k^{-1}\int_{\cI} \big( \ind{L_{t-}+y  \le x} - \ind{L_{t-} \le x}\big) g(t-,T_k,T_{k+1},x)\left( e^{C(t,T_k,T_{k+1};y)}-1\right)\mu^L(dt,dy).
\end{align*}
Summarizing, we obtain the following dynamics of $L$,
\begin{align}
{d L(t,T_k,x) }&= {L(t-,T_k,x)} \big( dM_t^x - \lambda(t,x) dt \big) \nonumber\\
&+ \ind{L_{t-} \le x} \frac{\delta_k L(t-,T_k,x) +1}{\delta_k} \bigg[
\Big( A(t) + \langle B(t),m \rangle + \half \langle \Sigma B(t), B(t) \rangle  \nonumber\\
&+ \int_{\R^d} \Big( e^{\langle B(t),z \rangle} -1 -  \ind{\parallel z \parallel  \le 1} \langle B(t),z\rangle \Big) \nu(dz) \Big) dt \nonumber\\
&+\langle B(t), dW(t) \rangle+\int_{\R^d} \Big( e^{\langle B(t),z\rangle}-1 \Big) (\mu(dt,dz) - \nu(dz)dt) \nonumber\\
&+\int_\cI \Big( e^{C(t;y)}-1 \Big) \mu^L(dy,dt) \nonumber\\
& + \int_{\cI} \big( \ind{L_{t-}+y  \le x} - \ind{L_{t-} \le x}\big) \left( e^{C(t;y)}-1\right)\mu^L(dt,dy)\bigg].
\label{eq:dynL}
\end{align}
Compensating the remaining $\mu^L$ terms and
collection all $ds$-terms and using \eqref{def:J} gives the drift term of $L$,
\begin{align}\label{eq:driftL}
&- L(t-,T_k,x) \lambda(t,x) \\
&+ \ind{L_{t-} \le x} \frac{\delta_k L(t-,T_k,x) +1}{\delta_k} \bigg[ A(t) + J(-B(t)) + \int_{\cI}  \left( e^{ C(t;y) } -1  \right)\ind{L_{t}+y \le x}\nu^L(t,dy) \bigg]. \nonumber
\end{align}
In the next step we will use the drift condition \eqref{dc1} to work further on this expression. Recall that
\begin{align*}
B(t) &= \int_S^T b(t,u,x) du = b^*(t,T,x) - b^*(t,S,x),
\end{align*}such that  \eqref{def:J} gives
\begin{align*}
 J(-B(t) ) &= -\langle m, b^*(t,S,x) \rangle +\langle m, b^*(t,T,x) \rangle  \\
 & + \half \langle  \Sigma (b^*(t,T,x)-b^*(t,S,x)), (b^*(t,S,x) - b^*(t,T,x))\rangle \\[2mm]
 & + \int_{\R^d} \left(e^{- \langle b^*(t,S,x) - b^*(t,T,x), z \rangle  } -1 +
 \ind{\parallel z \parallel \le 1}\langle b^*(t,S,x) - b^*(t,T,x), z \rangle  \right)\nu(dz)
 \\ &= J\big(b^*(t,S,x)\big) -J \big( b^*(t,T,x) \big)  \\[2mm]
 &+ \langle  \Sigma (b^*(t,T,x)-b^*(t,S,x) ), b^*(t,T,x) \rangle \\
 &+ \int_{\R^d} \left( e^{- \langle b^*(t,S,x) - b^*(t,T,x), z \rangle  } - e^{- \langle b^*(t,S,x) , z \rangle}
 + e^{ -\langle b^*(t,T,x) , z \rangle} -1 \right) \nu(dz)
\end{align*}
With $A(t) = a^*(t,T,x) - a^*(t,S,x)$
the drift condition \eqref{dc1} gives
\begin{align*}
A(t) &= J(b^*(t,T,x)) - J(b^*(t,S,x)) \\
&+\int_\cI \Big( e^{-c^*(t,T,x;y)}- e^{-c^*(t,S,x;y)}\Big) \ind{L_{t}+y \le x}\nu^L(t,dy).
\end{align*}Hence,
\begin{align}
\lefteqn{A(t) + J(-B(t)) +  \int_{\cI}  \left( e^{ C(s;y) } -1  \right)\ind{L_{t}+ y \le x}\nu^L(t,dy) }\quad\nonumber\\
&=  \langle  \Sigma  (b^*(t,T,x)-b^*(t,S,x)) , b^*(t,T,x) \rangle \nonumber\\
 &+ \int_{\R^d} \left( e^{ \langle b^*(t,T,x) - b^*(t,S,x), z \rangle  } - e^{- \langle b^*(t,S,x) , z \rangle}
 + e^{ -\langle b^*(t,T,x) , z \rangle} -1 \right) \nu(dz) \nonumber \\
&+\int_\cI \Big(e^{c^*(t,T,x;y)-c^*(t,S,x;y)} - e^{-c^*(t,S,x;y)}+e^{-c^*(t,T,x;y)}\Big) \ind{L_{t}+y \le x}\nu^L(t,dy).\label{temp222}
\end{align}
Inserting this in \eqref{eq:driftL} and using  the dynamics of $L$ in \eqref{eq:dynL} we obtain the result.
\end{proof}

As byproduct of the above proof we obtain the dynamics of the $(S,T,x)$-forward rate.
\begin{col}\label{lem:F}
Assume (A1)-(A4) and \eqref{dc1} holds and $(T_k,x_i)\in \cS$ with
$k < n$.  Then, on $\{L_{t- \le x_i}\}$,
\begin{align}\label{MM:F}
\frac{dF(t,T_k,T_{k+1},x_i)}{F(t-,T_k,T_{k+1},x_i)} &=  \Big( - \lambda(t,x_i)
 +D(t,T_k,T_{k+1},x_i) \Big) dt  + d\tilde{\tilde M}_t
\end{align}
with $D$ from \eqref{def:DriftD} and the local martingale $\tilde{\tilde M}$ given in \eqref{def:ttM}.
\end{col}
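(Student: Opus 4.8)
The plan is to extract the dynamics of $F(t,T_k,T_{k+1},x_i)$ directly from the intermediate computations in the proof of the preceding theorem, without redoing the whole argument. Recall that by \eqref{def:F} and \eqref{def:g} we have $F(t,S,T,x) = \ind{L_t \le x}\, g(t,S,T,x)$, and the proof of the theorem already produced the complete dynamics of $g(t,T_k,T_{k+1},x_i)$ in \eqref{temp221}, as well as the identity \eqref{temp222} which rewrites the drift of $g$ (after compensating the $\mu^L$ terms and collecting the $ds$-terms, using \eqref{def:J}) under the drift condition \eqref{dc1} in terms of the quantity $D(t,T_k,T_{k+1},x_i)$ from \eqref{def:DriftD}.

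First I would write, as in the proof of the theorem, the product rule
\begin{align*}
dF(t,T_k,T_{k+1},x_i) &= g(t-,T_k,T_{k+1},x_i)\, d\ind{L_t \le x_i} + \ind{L_{t-}\le x_i}\, dg(t,T_k,T_{k+1},x_i) \\
&\quad + d[\ind{L_\cdot \le x_i}, g(\cdot,T_k,T_{k+1},x_i)]_t,
\end{align*}
then substitute $d\ind{L_t \le x_i} = dM_t^{x_i} - \ind{L_t \le x_i}\lambda(t,x_i)\,dt$ and the same covariation term computed in the theorem's proof, namely
$$
d[\ind{L_\cdot \le x_i}, g(\cdot,T_k,T_{k+1},x_i)]_t = \int_\cI \big(\ind{L_{t-}+y \le x_i} - \ind{L_{t-}\le x_i}\big) g(t-,T_k,T_{k+1},x_i)\big(e^{C(t;y)}-1\big)\mu^L(dt,dy).
$$
Dividing through by $F(t-,T_k,T_{k+1},x_i) = \ind{L_{t-}\le x_i}\, g(t-,T_k,T_{k+1},x_i)$ on the set $\{L_{t-}\le x_i\}$, the first term contributes $dM_t^{x_i} - \lambda(t,x_i)\,dt$, while the $dg/g$ part contributes \eqref{temp221}. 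Combining the remaining-$\mu^L$ compensation with \eqref{temp221}, collecting all $ds$-terms and invoking \eqref{temp222} turns the drift into exactly $-\lambda(t,x_i) + D(t,T_k,T_{k+1},x_i)$; the compensated jump integrals and the Wiener integral assemble into the local martingale $\tilde{\tilde M}$ (one would define \eqref{def:ttM} to be precisely this collection of terms).

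The only genuine simplification relative to the theorem is that here there is no factor $(1+\delta_k L)/\delta_k$ and no affine shift by $-1$, since $F = \ind{L\le x}g$ rather than $L(t,T_k,x) = \ind{L\le x}\delta_k^{-1}(g-1)$; this makes the drift purely multiplicative, which is why the statement is phrased as $dF/F(t-)$. The main (modest) obstacle is simply bookkeeping: one must be careful that the $dM^{x_i}$ term in $d\ind{L_t\le x_i}$ and the term $-\ind{L_{t-}\le x_i}$ inside the covariation integrand combine correctly with the $\ind{L_{t-}+y\le x_i}$ indicator so that, on $\{L_{t-}\le x_i\}$, the $\mu^L$-integrand becomes $(e^{C(t;y)}-1)\ind{L_{t-}+y\le x_i}$ plus the pure-default contribution, matching the structure already displayed in \eqref{eq:dynL}. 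Since all these pieces appear verbatim in the proof of the theorem, the corollary follows by reading off the $S=T_k$, $T=T_{k+1}$, $x=x_i$ specialization.
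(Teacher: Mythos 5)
Your proposal is correct and follows essentially the same route as the paper: the paper's proof likewise writes $F=\ind{L_t\le x_i}\,g(t,T_k,T_{k+1},x_i)$, applies the product rule with the covariation term computed as in \eqref{temp514}, and then invokes \eqref{temp221} and \eqref{temp222} to read off the drift $-\lambda(t,x_i)+D(t,T_k,T_{k+1},x_i)$ and the martingale part \eqref{def:ttM}. No gaps.
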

\begin{proof}
Noting that $F(t,T_k,T_{k+1},x_i) = \ind{L_t \le x_i} g(t,T_k,T_{k+1},x_i)$ with $g$ defined in \eqref{def:g}, we obtain the dynamics of $F$ via
\begin{align*}
d F(t,T_k,x_i) &=  g(t-,T_k,T_{k+1},x_i) d \ind{L_t\le x_i}  \\
&+ \ind{L_{t-} \le x_i}   dg(t,T_k,T_{k+1},x_i) \\
&+ d [\ind{L_t \le x_i},  g(t,T_k,T_{k+1},x_i) ]. \end{align*}
As above we have that
\begin{align*}
\lefteqn{
d [\ind{L_t \le x_i},  g(t,T_k,T_{k+1},x_i ] } \hspace{0.5cm}\\ &=
\int_{\cI} \big( \ind{L_{t-}+y  \le x_i} - \ind{L_{t-} \le x_i}\big) g(t-,T_k,T_{k+1},x_i)\left( e^{C(t,T_k,T_{k+1};y)}-1\right)\mu^L(dt,dy).
\end{align*}
With \eqref{temp221} and \eqref{temp222} we obtain the stated dynamics. Here
\begin{align}\label{def:ttM}
d\tilde{\tilde M}_t &:=
  dM_t^x + \int_{\R^d} \left( e^{ \langle b^*(t,T_{k+1},x_i) - b^*(t,T_{k},x_i), z \rangle } -1  \right) (\mu(dt,dz)-\nu(dz)dt) \nonumber\\
&+\int_{\cI}\left( e^{(c^*(t,T_{k+1},x_i;y)-c^*(t,T_k,x_i;y)) } -1  \right)  \ind{L_{t-}+y\le x_i} (\mu^L(dt,dy)-\nu^L(dy)dt) \nonumber\\[2mm]
& + \langle b^*(t,T_{k+1},x_i) - b^*(t,T_{k},x_i),  dW(t) \rangle.
\end{align}
\end{proof}

\subsection{Forward rate modelling}
In practical applications one needs a simple structure of the $(T_k,x_i)$-rates which is analysed in this section. For notational convenience, we consider  forward bond prices instead of $(T_k,x_i)$-rates themselves. A result for the $(T_k,x_i)$-rate can be obtained in a similar way.
Recall from \eqref{eqMx} that $\lambda(t,x)$ was the intensity that $\ind{L_t\le x}$ jumps to zero at $t$.
Let $\eta(t):= \inf\{1 \le i \le n: T_{i+1}>t\}.$
In the following model the forward rate is driven  by the L\'evy process $Z$ through the
functions $\beta_k$; the reaction on the loss process $L$ can be adjusted through the functions
$\gamma_k$. These may depend on the loss occurring at $t$, $\Delta L_t$, and the current loss level $L_{t-}$ itself.
We need the following assumption:
\begin{description}
\item[(A6)]
For each $(T_k,x_i) \in \cS$ the  functions $\beta_{ki}:\R^+\to \R^d$,
$\gamma_{ki}:\R^+\times\cI \times [0,1] \to \R$ are measurable and bounded.
\end{description}
\begin{prop} Assume (A1)-(A6) hold.  Forward bond prices given on $\{L_{t}\le x_i\}$ by \label{thm:marketF}
\begin{align}\label{eq:marketF}
&\frac{dF(t,T_k,T_{k+1},x_i)}{F(t-,T_k,T_{k+1},x_i)}  =  \alpha_{ki}(t) dt + \langle \beta_{ki}(t),   dW(t) \rangle \nonumber\\
&\quad+ \int_{\R^d}  \left(e^{ \langle \beta_{ki}(t), z \rangle } -1  \right) \mu(dt,dz)+ \int_{\cI}  \left(e^{ \gamma_{ki}(t,L_{t-};y)} -1  \right)\ind{L_{t-}+y \le x_i} \mu^L(dt,dy),
\end{align}
$(T_k,x_i) \in \cS$, $k<n$ and zero on $\{L_{t}> x_i\}$ constitute  an arbitrage-free market if
\begin{align*}
\alpha_{ki} (t) &=  - \lambda(t,x_i) +  \sum_{j=\eta(t)}^{k} \langle \beta_{ji} (t), \Sigma \beta_{ki} (t) \rangle \\
&+ \int_{\R^d} \bigg(e^{\langle \beta_{ki}(t),z\rangle} +\Big(e^{-\langle \beta_{ki}(t),z\rangle}-1\Big) \prod_{j=\eta(t)}^{k-1}e^{ -\langle \beta_{ji}(t), z \rangle}  -1 \bigg) \nu(dz) \\
&+ \int_{\cI} \bigg(e^{\gamma_{ki}(t,L_{t-};y)} +\Big(e^{-\gamma_{ki}(t,L_{t-};y)}-1\Big) \prod_{j=\eta(t)}^{k-1}e^{- \gamma_{ji}(t,L_{t-};y)}  \bigg) \ind{L_t+y \le x_i} \nu^L(t,dy)
\end{align*}
for all $(T_k,x_i),(T_{k+1},x_i)\in \cS$.
\end{prop}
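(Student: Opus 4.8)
The plan is to \emph{embed} the posited market model into the continuous forward-rate framework of Section~3 and then invoke Theorem~\ref{thm1}. First I would build a forward-rate model that reproduces \eqref{eq:marketF}: for each level $x_i$ put the volatility $b(t,u,x_i):=\delta_k^{-1}\beta_{ki}(t)$ on the tenor interval $(T_k,T_{k+1}]$, set $b(t,u,x_i):=0$ for $u>T_n$ and over the running stub $[t,T_{\eta(t)})$, and likewise $c(t,u,x_i;y):=\delta_k^{-1}\gamma_{ki}(t,L_{t-};y)$ on $(T_k,T_{k+1}]$ and $0$ otherwise; keep the prescribed initial curve $f(0,\cdot,\cdot)$ and then \emph{define} the drift $a(t,u,x_i)$ by differentiating in $u$ the right-hand side of the drift condition~\eqref{dc1} — this is legitimate since, by (A2), (A5) and (A6), that right-hand side is absolutely continuous in its upper limit. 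By construction the model satisfies (A1)--(A6) and \eqref{dc1}; taking $s=t$ in \eqref{dc1} yields \eqref{dc2}. Hence Theorem~\ref{thm1} applies and gives \eqref{EMM}, so the $(T,x)$-bond market is arbitrage-free, and with it the market of forward bond prices $F(\cdot,T_k,T_{k+1},x_i)$, which are ratios of those bonds.

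It then remains to check that the forward bond prices of this embedded model actually obey \eqref{eq:marketF} with the asserted $\alpha_{ki}$. I would invoke Corollary~\ref{lem:F}: on $\{L_{t-}\le x_i\}$ the embedded $F(t,T_k,T_{k+1},x_i)$ follows \eqref{MM:F}, whose diffusion and jump exponents are $\langle b^*(t,T_{k+1},x_i)-b^*(t,T_k,x_i),\cdot\rangle$ and $c^*(t,T_{k+1},x_i;y)-c^*(t,T_k,x_i;y)$. Since the pre-default forward bond price equals $\exp(\int_{T_k}^{T_{k+1}}f(t,u,x_i)\,du)$ and involves only the full tenor interval $[T_k,T_{k+1}]$, these exponents are $\langle\beta_{ki}(t),\cdot\rangle$ and $\gamma_{ki}(t,L_{t-};y)$, matching \eqref{eq:marketF}. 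The stub convention and telescoping give $b^*(t,T_{k+1},x_i)=\sum_{j=\eta(t)}^{k}\beta_{ji}(t)$, $b^*(t,T_k,x_i)=\sum_{j=\eta(t)}^{k-1}\beta_{ji}(t)$ (and the analogue for $c^*$), so each factor $e^{-\langle b^*(t,T_\ell,x_i),z\rangle}$ becomes $\prod_{j=\eta(t)}^{\ell-1}e^{-\langle\beta_{ji}(t),z\rangle}$ and the diffusive cross-term $\langle\Sigma(b^*(t,T_{k+1},x_i)-b^*(t,T_k,x_i)),b^*(t,T_{k+1},x_i)\rangle$ becomes $\sum_{j=\eta(t)}^{k}\langle\beta_{ji}(t),\Sigma\beta_{ki}(t)\rangle$. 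Equating the $dt$-parts (after bringing the jump integrals of \eqref{MM:F} and \eqref{eq:marketF} to a common form) gives $\alpha_{ki}(t)=-\lambda(t,x_i)+D(t,T_k,T_{k+1},x_i)$, and substituting the telescoped $b^*,c^*$ into \eqref{def:DriftD} reproduces exactly the asserted expression; the $-\lambda(t,x_i)$ is the contribution of the jump of $\ind{L_t\le x_i}$. This last part is pure bookkeeping once the substitutions are in place.

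I expect the main obstacle to be the treatment of the short end. One must make the embedding carry neither volatility nor contagion over the running stub $[t,T_{\eta(t)})$, since this is precisely what makes $b^*$ and $c^*$ telescope with the index starting at $\eta(t)$, and hence what produces the $\sum_{j=\eta(t)}^{k}$ and $\prod_{j=\eta(t)}^{k-1}$ in $\alpha_{ki}$. Intertwined with this is checking that the embedded model genuinely meets the drift condition in full — in particular that $f(t,t,x_i)=f(t,t)+\lambda(t,x_i)$ is consistent with the given initial term structure, which can be arranged as in \citeN{FilipovicSchmidtOverbeck}. The regularity and integrability needed for (A5) and for $a$ to be well defined follow routinely from the boundedness in (A2), (A5) and (A6).
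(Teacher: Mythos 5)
Your proposal is correct and follows essentially the same route as the paper: embed the market model into the forward-rate framework via $\beta_{ki}(t)=\int_{T_k}^{T_{k+1}}b(t,u,x_i)\,du$ and $\gamma_{ki}(t,L_{t-};y)=\int_{T_k}^{T_{k+1}}c(t,u,x_i;y)\,du$, read off $\alpha_{ki}=-\lambda(\cdot,x_i)+D$ from Corollary~\ref{lem:F}, and telescope $b^*,c^*$ using the convention that the coefficients vanish for past maturities. Two small slips worth fixing: the vanishing convention must be imposed for maturities $u<t$, i.e.\ on $[T_{\eta(t)},t)$ rather than on $[t,T_{\eta(t)})$ (only then does $\beta_{\eta(t),i}(t)=b^*(t,T_{\eta(t)+1},x_i)$ and hence the telescoping start at $j=\eta(t)$), and setting $s=t$ in \eqref{dc1} yields only $0=0$, not \eqref{dc2} --- though you correctly flag at the end that \eqref{dc2} must be arranged separately, a point the paper's own proof also leaves implicit.
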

In this way the $(T,x)$-forward bond price explicitly relates to $\lambda(t,x)$ and $\nu^L$.
With this result at hand pricing of typical derivatives on the forward rate can be done in a standard way.
The input parameters for the modeler are  $\beta_{ki}$, $\gamma_{ki}$ as well as $\Sigma$, $\nu$ and $\nu^L$ while the $\alpha_{ki}$
are determined as above to ensure that the model is free of arbitrage.

\begin{proof} The idea is to identify $b$ (see \eqref{feq}) from the dynamics of $F$ and then
show that the drift condition \eqref{dc1} holds with the chosen $\alpha$: a comparison of \eqref{eq:marketF} with Corollary \ref{lem:F} and \eqref{def:ttM} yields that
\begin{align*}
\beta_{ki}(t) &= \int_{T_k}^{T_{k+1}} b(t,u,x_i) du \quad \text{ and } \quad
\gamma_{ki}(t,L_{t-};y) =\int_{T_k}^{T_{k+1}} c(t,u,x_i;y) du  .
\end{align*}
The drift in \eqref{MM:F} yields that
\begin{align} \label{eq:alphaii}
\alpha_{ki}(t) &= -\lambda(t,x_i) + \langle  \Sigma \beta_{ki}(t),b^*(s,T_{k+1},x_i) \rangle \nonumber \\[2mm]
 &+ \int_{\R^d} \left( e^{ \langle\beta_{ki}(t), z \rangle  } +e^{-\langle b^*(t,T_k,x_i),z\rangle}\big( e^{-\langle \beta_{ki}(t),z \rangle}-1\big)-1 \right) \nu(dz) \nonumber\\
 & +\int_{\cI} \left( e^{ \gamma_{ki}(y,L_{t-})  } +e^{-c^*(t,T_k,x_i;y)}\big( e^{-\gamma_{ki}(y,L_{t-})}-1\big)  \right) \ind{L_t+y \le x_i} \nu^L(dy).
\end{align}
We have that
$\beta_{ki}(t) = b^*(t,T_{k+1},x_i)- b^*(t,T_{k},x_i)$ and we use the freedom to specify $b(t,T,x_i)\equiv 0$ whenever $t>T$. This in turn gives that for $T_j < t < T_{j+1}$
$$ \beta_{ji}(t) = \int_{T_j}^{T_{j+1}} b(t,u,x_i) du =  \int_t^{T_{j+1}}b(t,u,x_i) du = b^*(t,T_{j+1},x_i) $$
such that
\begin{align}
\label{eq:bstern}
 b^*(t,T_{k+1},x_i) =  \sum_{j=\eta(t)}^k \beta_{ji}(t). \end{align}
Analogously,
\begin{align}\label{eq:cstern}
 c^*(t,T_{k+1},x_i;y) =  \sum_{j=\eta(t)}^k \gamma_{ji}(t,L_{t-};y). \end{align}
Inserting this in \eqref{eq:alphaii} gives the claim.
\end{proof}

\subsection{$(T_k,x_i)$-rate modelling}
In this section we study the case where  instead of the forward price process the $(T_k,x_i)$-rate has a simple structure.
As in the previous setting we consider the lognormal case including jumps. For simplicity we assume $\nu=0$, i.e.\ the dynamics is only driven
by a Brownian motion and $L$. In turn we will obtain the result from \citeN{BGM} as a special case. Finally, we also give the results
for market models in a single-name credit risky case.

\begin{description}
\item[(A7)]
Assume $\nu(dx)=0$ and that for each $(T_k,x_i) \in \cS$ the  functions $\beta_{ki}:\R^+\to \R^d$,
$\gamma_{ki}:\R^+\times\R^+ \to \R$ are measurable and bounded.
\end{description}

\begin{prop} Assume (A1)-(A4) and (A7) hold.  $(T_k,x_i)$-rates  given  by
\begin{align}\label{eq:marketL}
{dL(t,T_k,x_i)} & = {L(t-,T_k,x_i)} \Big( \alpha_{ki}(t) dt + \langle \beta_{ki},   dW(t) \rangle\Big) \nonumber\\
&+ \frac{1+\delta_k L(t-,T_k,x_i)} {\delta_k } \int_{\cI}  \left(e^{ \gamma_{ki}(y,L_{t-})} -1  \right)\ind{L_{t-}+y \le x_i} \mu^L(ds,dy)
\end{align}
for all $(T_k,x_i) \in \cS$ and zero otherwise constitute  an arbitrage-free market if
\begin{align}\label{eq:dcLMM}
\alpha_{ki} (t) & =   - \lambda(t,x_i)  + \sum_{j=\eta(t)}^{k} \frac{1+\delta_j L(t-,T_j,x_i)}{\delta_j L(t-,T_j,x_i)}  \langle \beta_{ji} (t), \Sigma \beta_{ki} (t) \rangle \nonumber\\
&+\frac{1+\delta_k L(t-,T_k,x_i)}{\delta_k L(t-,T_k,x_i)}  \nonumber\\& \qquad\cdot \int_{\cI} \bigg(e^{\gamma_{ki}(t;y)}+\Big(e^{-\gamma_{ki}(t,L_{t-};y)}-1\Big) \prod_{j=\eta(t)}^{k-1}e^{ -\gamma_{ji}(t,L_{t-};y)}  \bigg) \ind{L_t+y \le x_i} \nu^L(dy)
\end{align}
for all $(T_k,x_i)\in \cS$.
\end{prop}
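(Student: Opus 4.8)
The plan is to follow the proof of Proposition~\ref{thm:marketF}: read the forward-rate volatility $b$ and contagion $c$ off the postulated $(T_k,x_i)$-rate dynamics \eqref{eq:marketL}, check that together with a drift $a$ fixed by the drift condition \eqref{dc1} (and $f(t,t,x_i)$ fixed by \eqref{dc2}) they define a forward-rate model satisfying (A1)--(A4) and \eqref{dc1}, and then invoke Theorem~\ref{thm1}(ii): this gives absence of arbitrage in the whole $(T,x)$-bond market, hence in the finite sub-market of $(T_k,x_i)$-rates, and by the general $(T_k,x_i)$-rate dynamics \eqref{MM:L}--\eqref{MM:L2} the $(T_k,x_i)$-rates then automatically satisfy \eqref{eq:marketL} with the stated drift.

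First I would identify the parameters. Since $\nu=0$ by (A7), the $\mu$-integral disappears from \eqref{MM:L2}, so comparing the martingale parts of \eqref{eq:marketL} and \eqref{MM:L2} on $\{L_{t-}\le x_i\}$ only involves the $dW$- and $\mu^L$-terms. The $\mu^L$-coefficients match, since both carry the prefactor $(1+\delta_k L(t-,T_k,x_i))/\delta_k$; this gives, exactly as in Proposition~\ref{thm:marketF}, $\gamma_{ki}(t,L_{t-};y) = c^*(t,T_{k+1},x_i;y)-c^*(t,T_k,x_i;y)$. The $dW$-coefficients, however, differ by the state-dependent factor $\delta_k L(t-,T_k,x_i)/(1+\delta_k L(t-,T_k,x_i))$ --- the coefficient of $\langle\beta_{ki},dW\rangle$ in \eqref{eq:marketL} is $L(t-,T_k,x_i)$, whereas in \eqref{MM:L2} it is $(1+\delta_k L(t-,T_k,x_i))/\delta_k$ --- so that $b^*(t,T_{k+1},x_i)-b^*(t,T_k,x_i) = \frac{\delta_k L(t-,T_k,x_i)}{1+\delta_k L(t-,T_k,x_i)}\,\beta_{ki}(t)$. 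As in the earlier proof I then use the gauge freedom $b(t,T,x_i)\equiv 0$, $c(t,T,x_i;\cdot)\equiv 0$ for $t>T$ (harmless since a $(T_k,x_i)$-bond only sees $b,c$ on $[t,T_{k+1}]$), which makes $b^*(t,T_{\eta(t)},x_i)=0=c^*(t,T_{\eta(t)},x_i;y)$ and turns the increments into telescoping sums,
\begin{align*}
b^*(t,T_{k+1},x_i) &= \sum_{j=\eta(t)}^{k}\frac{\delta_j L(t-,T_j,x_i)}{1+\delta_j L(t-,T_j,x_i)}\,\beta_{ji}(t), & c^*(t,T_{k+1},x_i;y) &= \sum_{j=\eta(t)}^{k}\gamma_{ji}(t,L_{t-};y),
\end{align*}
and likewise with $T_k$ in place of $T_{k+1}$ (one summand less). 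The divisions are legitimate because $L(t-,T_j,x_i)>0$ on $\{L_{t-}\le x_i\}$, the initial term structure being strictly decreasing in $T$.

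Next I would substitute into the drift of \eqref{MM:L}. After compensating the raw $\mu^L$-integral, the drift of \eqref{eq:marketL} is $L(t-,T_k,x_i)\alpha_{ki}(t) + \frac{1+\delta_k L(t-,T_k,x_i)}{\delta_k}\int_\cI(e^{\gamma_{ki}(t,L_{t-};y)}-1)\ind{L_{t-}+y\le x_i}\nu^L(t,dy)$, and this must equal the drift $\frac{1+\delta_k L(t-,T_k,x_i)}{\delta_k}D(t,T_k,T_{k+1},x_i) - L(t-,T_k,x_i)\lambda(t,x_i)$ of \eqref{MM:L} (the $-\lambda$ term reflecting, through \eqref{deflambda}, the compensator of the jump of $\ind{L_t\le x_i}$). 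Solving for $\alpha_{ki}(t)$ and inserting the telescoping sums into $D(t,T_k,T_{k+1},x_i)$ from \eqref{def:DriftD} with $\nu=0$ --- using $e^{-c^*(t,T_k,x_i;y)}=\prod_{j=\eta(t)}^{k-1}e^{-\gamma_{ji}(t,L_{t-};y)}$, $e^{-c^*(t,T_{k+1},x_i;y)}=e^{-\gamma_{ki}(t,L_{t-};y)}\prod_{j=\eta(t)}^{k-1}e^{-\gamma_{ji}(t,L_{t-};y)}$, expanding $\langle\Sigma(b^*(t,T_{k+1},x_i)-b^*(t,T_k,x_i)),b^*(t,T_{k+1},x_i)\rangle$ into a double sum and using the symmetry of $\Sigma$ --- leaves exactly the drift \eqref{eq:dcLMM}. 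Conversely, given bounded $\beta_{ki},\gamma_{ki}$ (A7) one defines $b,c$ by the relations above, fixes $a$ by \eqref{dc1} and $f(t,t,x_i)$ by \eqref{dc2}; then (A1)--(A4) and \eqref{dc1} hold, Theorem~\ref{thm1}(ii) gives absence of arbitrage, and plugging back into \eqref{MM:L} shows the resulting $(T_k,x_i)$-rates obey \eqref{eq:marketL} with drift \eqref{eq:dcLMM}. Specialising to $x_i=1$, where $\lambda(t,1)=0$ and $c\equiv 0$ by (A2), recovers the lognormal forward-LIBOR model of \citeN{BGM}.

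The step I expect to be the main obstacle is this drift bookkeeping. Because of the state-dependent factor $\delta_j L(t-,T_j,x_i)/(1+\delta_j L(t-,T_j,x_i))$, the telescoping above is weighted (unlike in Proposition~\ref{thm:marketF}), so one must verify carefully that, after expanding $D(t,T_k,T_{k+1},x_i)$ and dividing by $L(t-,T_k,x_i)$, the Brownian cross-terms reassemble into the sum appearing in \eqref{eq:dcLMM}, and that the $-\lambda(t,x_i)$ term --- coming from the jump of $\ind{L_t\le x_i}$ to zero at the crossing of $x_i$, whose intensity is $\lambda(t,x_i)$ by \eqref{deflambda} --- comes out with the right sign and no spurious factor. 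Everything concerning the contagion passes through as in Proposition~\ref{thm:marketF}, because there the $\mu^L$-prefactor already agrees between \eqref{eq:marketL} and \eqref{MM:L2}.
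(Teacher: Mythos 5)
Your proposal is correct and follows essentially the same route as the paper: the paper likewise reads off $b^*(t,T_{k+1},x_i)-b^*(t,T_k,x_i)=\frac{\delta_k L(t-,T_k,x_i)}{1+\delta_k L(t-,T_k,x_i)}\beta_{ki}(t)$ and $\gamma_{ki}=C(t,T_k,T_{k+1},x_i;\cdot)$ by comparison with \eqref{MM:L2}, forms the weighted telescoping sum for $b^*(t,T_{k+1},x_i)$ and the unweighted one \eqref{eq:cstern} for $c^*$, and obtains $\alpha_{ki}(t)=-\lambda(t,x_i)+\frac{1+\delta_k L(t-,T_k,x_i)}{\delta_k L(t-,T_k,x_i)}D(t,T_k,T_{k+1},x_i)$ from the drift of \eqref{MM:L}, concluding as in Proposition \ref{thm:marketF}. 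If anything, you are more explicit than the paper on two minor points it leaves implicit — the compensation of the raw $\mu^L$-integral in \eqref{eq:marketL} and the positivity of $L(t-,T_j,x_i)$ needed for the division — but the argument is the same.
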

With this result at hand pricing of typical derivatives on the $(T_k,x_i)$-rates can be done in a standard way,
see for example Section 6 in \citeN{BGM}.
As previously, the input parameters for the modeler are $\lambda$, $\beta_{ki}$ and $\gamma_{ki}$ while the $\alpha_{ki}$
are determined as above to ensure that the model is free of arbitrage.

\begin{proof}
We proceed similar to the proof of Proposition \ref{thm:marketF}.
A comparison of \eqref{eq:marketL} with  \eqref{MM:L2} yields  that
\begin{align}\label{bstern2}
\beta_{ki}(t) &=  \frac{1+\delta_k L(t-,T_k,x_i)} {\delta_k L(t-,T_k,x_i)}
\big(b^*(t,T_{k+1},x_i)-b^*(t,T_k,x_i) \big)
\end{align}
such that
\begin{align*}
  b^*(t,T_{k+1},x_i) &= b^*(t,T_k,x_i) + \beta_{ki}(t) \frac{\delta_k L(t-,T_k,x_i)}{1+\delta_k L(t-,T_k,x_i)} \\
  &= \sum_{j=\eta(t)}^k \beta_{ji}(t) \frac{\delta_j L(t-,T_j,x_i)}{1+\delta_j L(t-,T_j,x_i)},
\end{align*}
compare \eqref{eq:bstern}.
Furthermore, we have that  $\gamma_{ki}(y,L_{t-}) =C(t,T_k,T_{k+1}x_i;y)$
and hence \eqref{eq:cstern}. From the drift term in \eqref{MM:L}
we obtain that
\begin{align*} &\alpha_{ki}(t) =  -\lambda(t,x_i) +
\frac{1+\delta_k L(t-,T_k,x_i)} {\delta_k L(t-,T_k,x_i)} D(t,T_k,T_{k+1},x_i).
\end{align*}
With \eqref{bstern2} and \eqref{eq:cstern}
\begin{align*}
 D(t,&T_k,T_{k+1},x_i) = \frac {\delta_k L(t-,T_k,x_i)}  {1+\delta_k L(t-,T_k,x_i)}  \langle  \Sigma \beta_{ki}(t),b^*(t,T_{k+1},x_i) \rangle \\[2mm]
 &+\int_{\cI} \bigg(e^{\gamma_{ki}(t,L_{t-};y)} +\Big(e^{-\gamma_{ki}(t,L_{t-};y)}-1\Big) \prod_{j=\eta(t)}^{k-1}e^{ -\gamma_{ji}(t,L_{t-};y)}  \bigg) \ind{L_t+y \le x_i} \nu^L(dy)
\end{align*}
and we conclude as in Proposition \ref{thm:marketF}.
\end{proof}

\begin{rem} With $x_m=1$ and $c(\cdot,T,1;y)=1$ (from Assumption (A2)) gives the risk-free Libor market model
as a special case and \eqref{eq:dcLMM} equals Equation (2.6) in \citeN{BGM}. Also a doubly stochastic model for single-name
credit risk as in \citeN{Huehne} is a special case: consider the doubly stochastic random time $\tau$ which has an intensity $(\lambda_t)_{t \ge 0}$.
Choose $L_t:= \half \ind{\tau \le t}$. Then, with $x_1=\half$, one has that the rates of  defaultable bond satisfying,
on $\{\tau > t\}$,
\begin{align}
\frac{d\bar L(t,T_k)}{\bar L(t-,T_k)} & = \bar \alpha_{k}(t) dt + \langle \bar \beta_{k}(t),   dW(t) \rangle
\end{align}
 and zero otherwise constitute  an arbitrage-free market if
\begin{align}
\bar \alpha_{k} (t) &=
 - \lambda(t) + \Big\langle \bar \beta_{k}(t), \sum_{j=\eta(t)}^k \frac{\delta_j \bar L(t-,T_j)}{1+\delta_j \bar L(t-,T_j)} \bar\beta_{j}(t)
\Big\rangle.
 \end{align}
\end{rem}

\bibliographystyle{chicago}

\end{document}